\newtheorem{proposition}{Proposition}
\newcommand{\be}{\begin{equation}}
\newcommand{\ee}{\end{equation}}
\newcommand{\vect}[1]{\boldsymbol{#1}}
\begin{document}

\title{Solutions of random-phase approximation equation
for positive-semidefinite stability matrix}

\author{\name{H. Nakada}{\ast}}

\affil{Department of Physics, Graduate School of Science, Chiba University,\\
Yayoi-cho 1-33, Inage, Chiba 263-8522, Japan
\email{nakada@faculty.chiba-u.jp}}

\begin{abstract}%
It is mathematically proven that,
if the stability matrix $\mathsf{S}$ is positive-semidefinite,
solutions of the random-phase approximation (RPA) equation are all physical
or belong to Nambu-Goldstone (NG) modes,
and the NG-mode solutions may form Jordan blocks of $\mathsf{N\,S}$
($\mathsf{N}$ is the norm matrix)
but their dimension is not more than two.
This guarantees that the NG modes in the RPA can be separated out
via canonically conjugate variables.
\end{abstract}

\subjectindex{RPA, stability, Nambu-Goldstone mode}

\maketitle

The random-phase approximation (RPA) is widely used
as describing excitation properties
on top of mean-field (MF) solutions.
In Ref.~\cite{ref:Nak16},
I have mathematically argued properties of solutions of the RPA equation,
based on two types of dualities
which are named UL- and LR-dualities.
The solutions have been classified into the five categories
as disclosed by the dualities,
in Prop.~2 of Ref.~\cite{ref:Nak16}.
It has been also reconfirmed that,
if the stability matrix $\mathsf{S}$ is positive-definite,
the solutions are all physical, belonging to Class~(1) of Prop.~2,
which was already verified in Ref.~\cite{ref:Thou61}.
Its opposite has been proven as well.
However, spontaneous symmetry breakdown (SSB) necessarily occurs
for the MF solution in localized self-bound systems
like atomic nuclei~\cite{ref:Thou60}.
Individual SSB leads to a Nambu-Goldstone (NG) mode,
and therefore $\mathsf{S}$ is quite generally positive-semidefinite
in physical cases, rather than positive-definite.

A method to handle NG modes has been established~\cite{ref:Thou61,ref:TV62,ref:RS80},
which seems valid as long as the NG mode
corresponds to physical degrees-of-freedom (d.o.f.).
This method presumes that each NG mode forms a two-dimensional Jordan block
of $\mathsf{N\,S}$, where $\mathsf{N}$ is the norm matrix.
On the contrary,
the dualities do not limit dimension of Jordan blocks,
as exemplified in Appendix~C.5 of Ref.~\cite{ref:Nak16}.
To my best knowledge, there have been no rigorous arguments
that elucidate dimension of Jordan blocks for the NG modes,
despite an argument for restricted cases~\cite{ref:Ner09}.
In this paper, I shall prove what dimensionality is possible
for Jordan blocks associated with the NG-mode solutions
(\textit{i.e.}, Class~(5) in Prop.~2 of Ref.~\cite{ref:Nak16}),
when $\mathsf{S}$ is positive-semidefinite.
Section, Appendix or Proposition number referred to in the text
signifies that of Ref.~\cite{ref:Nak16}, even if not mentioned.

The RPA equation is expressed as
\be \mathsf{S}\,\vect{x}_\nu=\omega_\nu\mathsf{N}\,\vect{x}_\nu\,;\quad
 \mathsf{N}:=\begin{pmatrix} 1&0\\ 0&-1 \end{pmatrix}\,.
\label{eq:RPAeq-b}\ee
The stability matrix $\mathsf{S}$ is a square matrix satisfying
\be \mathsf{S}=\mathsf{S}^\dagger\,,\quad
 \mathsf{\Sigma}_x\,\mathsf{S}^\ast\,\mathsf{\Sigma}_x = \mathsf{S}\,;\quad
 \mathsf{\Sigma}_x := \begin{pmatrix} 0&1\\1&0 \end{pmatrix}\,,
\label{eq:S-prop}\ee
and its dimensionality is denoted by $2D$.
It is noted that Eq.~(\ref{eq:RPAeq-b}) can read
an eigenvalue problem of $\mathsf{N\,S}$.
\begin{proposition}\label{theor:semi-stab}
If the stability matrix $\mathsf{S}$ is positive-semidefinite,
solutions of the RPA equation are constrained to those of Classes~(1) and (5)
in Prop.~2,
and dimensionality of the Jordan blocks for the NG-mode solutions
does not exceed two.
\end{proposition}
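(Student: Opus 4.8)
The plan is to recast the eigenvalue problem of $\mathsf{N\,S}$ as a self-adjoint problem in an indefinite metric and then use positive-semidefiniteness as a one-sided constraint that forbids long Jordan chains. First I would introduce the Hermitian (but in general indefinite) form $\langle\vect{x},\vect{y}\rangle_{\mathsf{N}}:=\vect{x}^\dagger\mathsf{N}\,\vect{y}$ and observe, from $\mathsf{S}=\mathsf{S}^\dagger$ and $\mathsf{N}^2=1$, that $\mathsf{N\,S}$ is self-adjoint with respect to it, $\langle\mathsf{N\,S}\,\vect{x},\vect{y}\rangle_{\mathsf{N}}=\langle\vect{x},\mathsf{N\,S}\,\vect{y}\rangle_{\mathsf{N}}$, together with the identity
\be
 \langle\vect{x},\mathsf{N\,S}\,\vect{y}\rangle_{\mathsf{N}}=\vect{x}^\dagger\mathsf{S}\,\vect{y}\,.
\label{eq:planid}
\ee
Because the right-hand side is a positive-semidefinite form, the single elementary fact I would invoke repeatedly is that $\vect{v}^\dagger\mathsf{S}\,\vect{v}=0$ forces $\mathsf{S}\,\vect{v}=0$ (factor $\mathsf{S}=\mathsf{T}^\dagger\mathsf{T}$).

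Next I would settle reality of the spectrum and reduce the admissible classes. For a genuine eigenvector, $\mathsf{S}\,\vect{x}=\omega\,\mathsf{N}\,\vect{x}$ gives $\vect{x}^\dagger\mathsf{S}\,\vect{x}=\omega\,\vect{x}^\dagger\mathsf{N}\,\vect{x}$, whose left side is real and nonnegative; hence either $\omega$ is real, or $\vect{x}^\dagger\mathsf{N}\,\vect{x}=0$. In the latter case $\vect{x}^\dagger\mathsf{S}\,\vect{x}=0$, so $\mathsf{S}\,\vect{x}=0$ and therefore $\omega=0$. Thus every eigenvalue is real, and any mode of vanishing $\mathsf{N}$-norm must sit at $\omega=0$. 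This eliminates the categories carrying complex frequencies and the finite-frequency zero-norm category, leaving only Classes~(1) and (5) of Prop.~2.

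For the Jordan structure I would argue on chains of the Jordan decomposition of $\mathsf{N\,S}$. Assuming a length-two chain at real $\omega$, $\mathsf{N\,S}\,\vect{u}_1=\omega\,\vect{u}_1$ and $\mathsf{N\,S}\,\vect{u}_2=\omega\,\vect{u}_2+\vect{u}_1$, I pair the second relation against $\vect{u}_1$ and use self-adjointness; the $\omega$-terms cancel and yield $\langle\vect{u}_1,\vect{u}_1\rangle_{\mathsf{N}}=0$, whence $\vect{u}_1^\dagger\mathsf{S}\,\vect{u}_1=\omega\,\langle\vect{u}_1,\vect{u}_1\rangle_{\mathsf{N}}=0$ and so $\mathsf{S}\,\vect{u}_1=0$. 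If $\omega\neq0$ this gives $\vect{u}_1=\omega^{-1}\mathsf{N\,S}\,\vect{u}_1=0$, a contradiction: no Jordan block occurs off $\omega=0$, so the finite-frequency solutions are diagonalizable physical Class-(1) modes. At $\omega=0$ I would push one step further: for a putative length-three chain $\mathsf{N\,S}\,\vect{v}_1=0$, $\mathsf{N\,S}\,\vect{v}_2=\vect{v}_1$, $\mathsf{N\,S}\,\vect{v}_3=\vect{v}_2$, the identity~\eqref{eq:planid} and self-adjointness give $\vect{v}_2^\dagger\mathsf{S}\,\vect{v}_2=\langle\vect{v}_2,\vect{v}_1\rangle_{\mathsf{N}}=\langle\mathsf{N\,S}\,\vect{v}_3,\vect{v}_1\rangle_{\mathsf{N}}=\langle\vect{v}_3,\mathsf{N\,S}\,\vect{v}_1\rangle_{\mathsf{N}}=0$, so $\mathsf{S}\,\vect{v}_2=0$ and hence $\vect{v}_1=\mathsf{N\,S}\,\vect{v}_2=0$, a contradiction. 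Therefore the $\omega=0$ (NG) Jordan blocks cannot exceed dimension two.

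The step I expect to be the real obstacle is not the linear algebra above, each chain estimate being short, but the bridge to Prop.~2: I must check that ``all eigenvalues real'' together with ``zero $\mathsf{N}$-norm $\Rightarrow\omega=0$'' is exactly what excludes the intermediate classes as they are characterized there through the UL- and LR-dualities, and that the surviving $\omega=0$ root vectors are precisely the NG-mode (Class-(5)) solutions. The reality symmetry $\mathsf{\Sigma}_x\,\mathsf{S}^\ast\,\mathsf{\Sigma}_x=\mathsf{S}$, which plays no role in the semidefiniteness estimates, should enter here to match the $\pm\omega_\nu$ pairing of solutions with the duality-based bookkeeping and to confirm that each two-dimensional NG block furnishes the canonically conjugate pair anticipated in the abstract.
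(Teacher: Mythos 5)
Your argument is correct, and it takes a genuinely different route from the paper. The paper hermitizes the problem through $\tilde{\mathsf{S}}=\mathsf{S}^{1/2}\,\mathsf{N\,S}^{1/2}$, diagonalizes $\tilde{\mathsf{S}}$, and maps its eigenvectors back with $\mathsf{N\,S}^{1/2}$ to exhibit the normalizable Class-(1) solutions; for the NG blocks it writes the chain relation $\mathsf{S}\,\vect{\xi}_2^{(\nu)}=ic_1^{(\nu)}\mathsf{N}\,\vect{x}_\nu$ and deduces $\vect{\xi}_2^{(\nu)\dagger}\mathsf{S}\,\vect{\xi}_2^{(\nu)}>0$, then leans on the LR-duality apparatus (Prop.~3 of Ref.~[1]) to conclude the block is self LR-dual and hence two-dimensional. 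You instead treat $\mathsf{N\,S}$ directly as a self-adjoint operator in the indefinite metric $\langle\cdot,\cdot\rangle_{\mathsf{N}}$ and use only the implication $\vect{v}^\dagger\mathsf{S}\,\vect{v}=0\Rightarrow\mathsf{S}\,\vect{v}=0$; this is the standard Pontryagin-space argument, and it yields reality of the spectrum, diagonalizability at $\omega\neq 0$, and the three-chain contradiction at $\omega=0$ without ever forming $\mathsf{S}^{1/2}$ or invoking the duality formalism. Your length-three computation $\vect{v}_2^\dagger\mathsf{S}\,\vect{v}_2=\langle\vect{v}_3,\mathsf{N\,S}\,\vect{v}_1\rangle_{\mathsf{N}}=0$ is in effect the contrapositive of the paper's inequality~(4): a third chain vector would force the second one into $\mathrm{Ker}(\mathsf{S})$. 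What your version buys is self-containedness and a cleaner mechanism; what the paper's version buys is the explicit construction of the physical eigenvectors and the identification of each NG block as self LR-dual, which is what it needs afterwards for the canonically conjugate separation. On the one point you flagged, the bridge to Prop.~2 is unproblematic: the excluded classes there are precisely those with non-real $\omega$ or with zero $\mathsf{N}$-norm at $\omega\neq 0$, both of which your two facts rule out, and the $\omega=0$ root vectors of $\mathsf{N\,S}$ coincide with $\mathrm{Ker}(\mathsf{S})$ at the bottom of each chain, which is the defining property of Class~(5).
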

\begin{proof}
Provided that $\mathsf{S}$ is positive-semidefinite.
As shown in Sec.~3.3,
$\tilde{\mathsf{S}}:=\mathsf{S}^{1/2}\,\mathsf{N\,S}^{1/2}$ is hermitian,
and gives an eigenvalue problem almost equivalent to that of $\mathsf{N\,S}$.
To be precise, eigenvalues of $\tilde{\mathsf{S}}$
match those of $\mathsf{N\,S}$ up to their degeneracies,
and eigenvectors of $\mathsf{N\,S}$ are those of $\tilde{\mathsf{S}}$,
although the opposite of the latter part is not fulfilled
for Jordan blocks of $\mathsf{N\,S}$.
Quite similarly to the proof for Prop.~5,
$\tilde{\mathsf{S}}$ is diagonalizable
by a certain matrix $\mathsf{Y}$,
$\tilde{\mathsf{S}}\,\mathsf{Y}=\mathsf{Y}\,\tilde{\mathsf{\Omega}}$,
where $\tilde{\mathsf{\Omega}}$ is a diagonal matrix.
The eigenvalues in $\tilde{\mathsf{\Omega}}$ are all real
and paired as $\pm\omega_\nu$ (see Prop.~1 of Ref.~\cite{ref:Nak16}).
We take ${\displaystyle\tilde{\mathsf{\Omega}}:=\begin{pmatrix}
 \mathrm{diag.}(\omega_\nu)&0\\ 0&-\mathrm{diag.}(\omega_\nu) \end{pmatrix}}$
so that $0\leq\omega_1\leq\omega_2\leq\cdots\leq\omega_D$.
In correspondence to this ordering, we write
$\mathsf{Y}=(\vect{y}_1,\cdots,\vect{y}_{2D})$.
Let us denote dimensionality of $\mathrm{Ker}(\tilde{\mathsf{S}})$
(the kernel of $\tilde{\mathsf{S}}$) by $\tilde{K}$;
\textit{i.e.}, $\omega_1=\omega_2=\cdots=\omega_{\tilde{K}}=0$.
For $\tilde{K}<\nu\leq D$, $(\omega_\nu,\mathsf{N\,S}^{1/2}\vect{y}_\nu)$
provides an eigensolution of $\mathsf{N\,S}$,
which is normalizable because
\be (\mathsf{N\,S}^{1/2}\vect{y}_\nu)^\dagger\,\mathsf{N}\,
 (\mathsf{N\,S}^{1/2}\vect{y}_\nu)
 = \vect{y}_\nu^\dagger\,\tilde{\mathsf{S}}\,\vect{y}_\nu
 = \omega_\nu\,\vect{y}_\nu^\dagger\,\vect{y}_\nu > 0\,. \ee
These solutions belong to Class~(1),
and the others are obviously belong to Class~(5).\\
We now focus on the NG-mode solutions;
\be \mathsf{S}\,\vect{x}_\nu = \vect{0}\,. \label{eq:null} \ee
Equation~(\ref{eq:null}) indicates $\vect{x}_\nu\in\mathrm{Ker}(\mathsf{S})$,
as well as that it is a NG-mode solution of the RPA equation.
Note that $\mathrm{Ker}(\mathsf{S})\subseteq\mathrm{Ker}(\tilde{\mathsf{S}})$.
Suppose that $\vect{\xi}_1^{(\nu)}=\vect{x}_\nu$ forms a Jordan block.
We then have, by setting $\omega_\nu=0$ and $k=1$
in Eq.~(11) of Ref.~\cite{ref:Nak16},
\be \mathsf{S}\,\vect{\xi}_2^{(\nu)}
 = ic_1^{(\nu)}\,\mathsf{N}\,\vect{x}_\nu\,,
\label{eq:Jordan}\ee
where $c_1^{(\nu)}\,(\in\mathbf{C})\ne 0$.
Since $\vect{\xi}_2^{(\nu)}\notin\mathrm{Ker}(\mathsf{S})$,
multiplication of Eq.~(\ref{eq:Jordan}) by $\vect{\xi}_2^{(\nu)\dagger}$ yields
\be \vect{\xi}_2^{(\nu)\dagger}\,\mathsf{S}\,\vect{\xi}_2^{(\nu)}
 = ic_1^{(\nu)}\,\vect{\xi}_2^{(\nu)\dagger}\,\mathsf{N}\,\vect{x}_\nu > 0\,.
\label{eq:S_22} \ee
From Prop.~3 and relevant arguments,
$\vect{\xi}_2^{(\nu)\dagger}\,\mathsf{N}\,\vect{x}_\nu\ne 0$ enables us
to identify $\vect{\xi}_2^{(\nu)}$ to be LR-dual
to $\vect{\xi}_1^{(\nu)}\,(=\vect{x}_\nu)$,
deriving that this Jordan block is self LR-dual (in the respect of Prop.~3)
and $d_\nu=2$ ($d_\nu$ is the dimension of the Jordan block).
\end{proof}

While the above proposition does not exclude the possibility
of a pair of $d_\nu=1$ NG modes,
they result in a trivial case in which both of the canonically conjugate d.o.f.
are not included in the RPA Hamiltonian (\textit{i.e.}, $\mathsf{S}$).
From Prop.~6 the two-dimensional Jordan blocks for the NG-mode solutions
can be made doubly self dual.
Notice that, for self UL-dual basis vectors,
`canonical conjugacy' corresponds to the LR-duality.
Decomposition of the RPA Hamiltonian in terms of conjugate variables
is accomplished by the projectors discussed in Sec.~4\footnote
{The example of $J_\pm$ for the SSB with respect to the rotation
in Sec.~5.3 of Ref.~\cite{ref:Nak16} was not appropriate,
in which the Jordan blocks corresponding to $J_\pm$ were claimed
not to be self LR-dual.
They are self LR-dual in truth, while not self UL-dual.}.
Thus, as long as $\mathsf{S}$ is positive-semidefinite,
it is mathematically guaranteed that the prescription
to separate out the NG modes
via canonically conjugate variables~\cite{ref:Thou61,ref:TV62,ref:RS80,ref:Row70},
which is equivalent to Eqs.~(\ref{eq:Jordan},\ref{eq:S_22}) above,
is applicable.
Then, for $\mathcal{W}_{[\nu]}=\mathrm{Ker}(\tilde{\mathsf{S}})$,
$\mathsf{S}_{[\nu]^{-1}}$ is positive-definite (see Sec.~4 for the notation).

\section*{Acknowledgment}

The author is grateful to K.~Neerg\aa rd
for drawing attention to the subject of this paper.
This work is financially supported in part
by JSPS KAKENHI Grant Number~24105008 and Grant Number~16K05342.


%

\vfill\pagebreak

\end{document}